\newcommand{\stirlingii}{\genfrac{\{}{\}}{0pt}{}}
\title{\LARGE \bf On the Effect of Task-to-Worker Assignment in Distributed Computing Systems with Stragglers} 
\author{Amir Behrouzi-Far$^{1}$ and Emina Soljanin$^{2}$
\thanks{$^{1}$Amir Behrouzi-Far is with the Department of Electrical and Computer Engineering,
        Rutgers University, Piscataway, NJ 08854, USA
        {\tt\small amir.behrouzifar@rutgers.edu}}%
\thanks{$^{2}$Emina Soljanin is with the Department of Electrical and Computer Engineering, Rutgers University,
        Piscataway, NJ 08854, USA
        {\tt\small emina.soljanin@rutgers.edu}}%
}
\begin{document}

\maketitle
\thispagestyle{empty}
\pagestyle{empty}

\newtheorem{theorem}{Theorem}
\newtheorem{definition}{Definition}
\newtheorem{lemma}{Lemma}
\newtheorem{proposition}{Proposition}

\begin{abstract}
We study the expected completion time of some recently proposed algorithms for distributed computing which redundantly assign computing tasks to multiple machines in order to tolerate a certain number of machine failures. 
We analytically show that not only the amount of redundancy  but also the task-to-machine assignments affect the latency in a distributed system. 
We study systems with a fixed number of computing tasks that are split in possibly overlapping batches, and independent exponentially distributed machine service times. 
We show that, for such systems, the uniform replication of non-overlapping (disjoint) batches of computing tasks achieves the minimum expected computing time.
   
\end{abstract}

\section{Introduction}

Distributed computing has gained great attention in the time of big data \cite{kshemkalyani2011distributed}. By enabling parallel task execution, distributed computing systems can bring considerable speed ups to e.g. matrix multiplication \cite{choi1998new}, model training in machine learning \cite{dean2012large} and convex optimization \cite{boyd2011distributed}.
However, implementing computing algorithms in a distributed system introduces new challenges that have to be addressed in order to benefit from paralleziation. In particular, since the failure rate and/or slow down of the system increase with the number of computing nodes, robustness is an essential part of any reliable distributed computing/storage algorithm \cite{elser2005reliable}. For achieving robustness, redundant computing/storage is introduced in the literature \cite{elser2005reliable,brun2011smart}. Redundancy in a distributed computing system enables the system to generate the overall result form the computations of a subset of all computing nodes, which provides robustness to failed and/or slow nodes, known as stragglers.

A number of algorithms for distributed computing which tolerate the failure of some computing nodes have recently appeared in the literature. For example, in \cite{tandon2016gradient} used error correcting codes and, for distributed gradient descent, authors established lower bound on the degree of redundancy for tolerating a specific number of node failures, and introduced some data-to-node assignment methods that achieve the lower bound. In \cite{halbawi2017improving}, Reed-Solomon codes are studied for distributed gradient descent, and the code construction, decoding algorithm and an asymptotic bound on the computation time are provided. In \cite{li2017near}, a random data assignment to computing nodes is proposed, and claimed to outperform the deterministic method in \cite{tandon2016gradient} in terms of average per-iteration computing time. Fault-tolerant distributed matrix multiplication is addressed in e.g. \cite{yu2017polynomial}. In this line of work, authors focus on failure tolerance capability of their proposed algorithms without considering the resulting computing time. 

The effect of the replicated redundancy on the distributed computing latency was studied in \cite{wang2015using}, where it is shown that delayed replication of the straggling tasks could reduce both cost and latency in the system. Moreover, authors in \cite{aktas2017effective} analyzed the latency of the system with coded redundancy, where instead of simple replication of the straggling tasks a coded combination of them would be introduced. The analysis of the same work shows that, coded redundancy could achieve the same latency with less cost of the computing. In \cite{aktas2018straggler} authors show that relaunching straggling tasks could significantly reduce the cost and latency in a distributed computing system. In these works, the delay of computing is considered without looking into the effect of task assignment in the timing performance.

In this paper, we study the computing time of the failure-tolerant distributed computing algorithms. We specifically focus on the algorithms proposed by \cite{tandon2016gradient} and \cite{li2017near}, which used error correcting codes and random task assignment respectively, and show that how the task assignment could improve the expected computing time of the same methods. We consider a distributed computing system consisting of multiple worker nodes, which perform a specific computation over a sebset of a possibly huge data set, and a master node, which collects the local computations from worker nodes and generates the overall result. The original data set is (redundantly) distributed among worker nodes and the subset of data at each worker is called a \textit{data batch}. Through analysis, we show that, with a fixed number of computing tasks and exponentially distributed service time of worker nodes, the lower bound of the expected computing time among different data distribution policies is achieved by the balanced assignment of non-overlapping batches. We also derived this lower bound and verified it by simulations.

The organization of this paper is as follows. In section \ref{sectionII}, we introduce the model for system architecture, computing task, data distribution and service time of the worker nodes. The data distribution policies, non-overlapping and overlapping batches, are described in section \ref{sectionIII}. In section \ref{sectionIV} we provide analysis of the computing time for each data distribution policies. The numerical results and the concluding remarks are provided in section \ref{sectionV} and section \ref{sectionVI}, respectively.

\section{System Model}\label{sectionII}

\subsection{Distributed Computing Architecture}
We study the system given in Fig. \ref{fig:arc}, which will be referred to as System~\ref{fig:arc} in the rest of the paper. The system consists of a data set with $S$ data blocks, a batching unit which partitions the data blocks into $B$ batches, a batch assignment unit which allocates data batches among $N$ worker nodes, each performing a specific computing task on its data batch, and a master node which collects local (coded) computations from worker nodes and computes the overall result accordingly. For a given data set, a fixed number of identical worker nodes and a master node, the average computing time of the system depends on how the data is distributed among workers, what the service time model of the worker nodes is, and how each worker codes its computations before sending it to the master node.
\begin{figure}[htbp]
   \centering
   \includegraphics[width=\columnwidth, keepaspectratio]{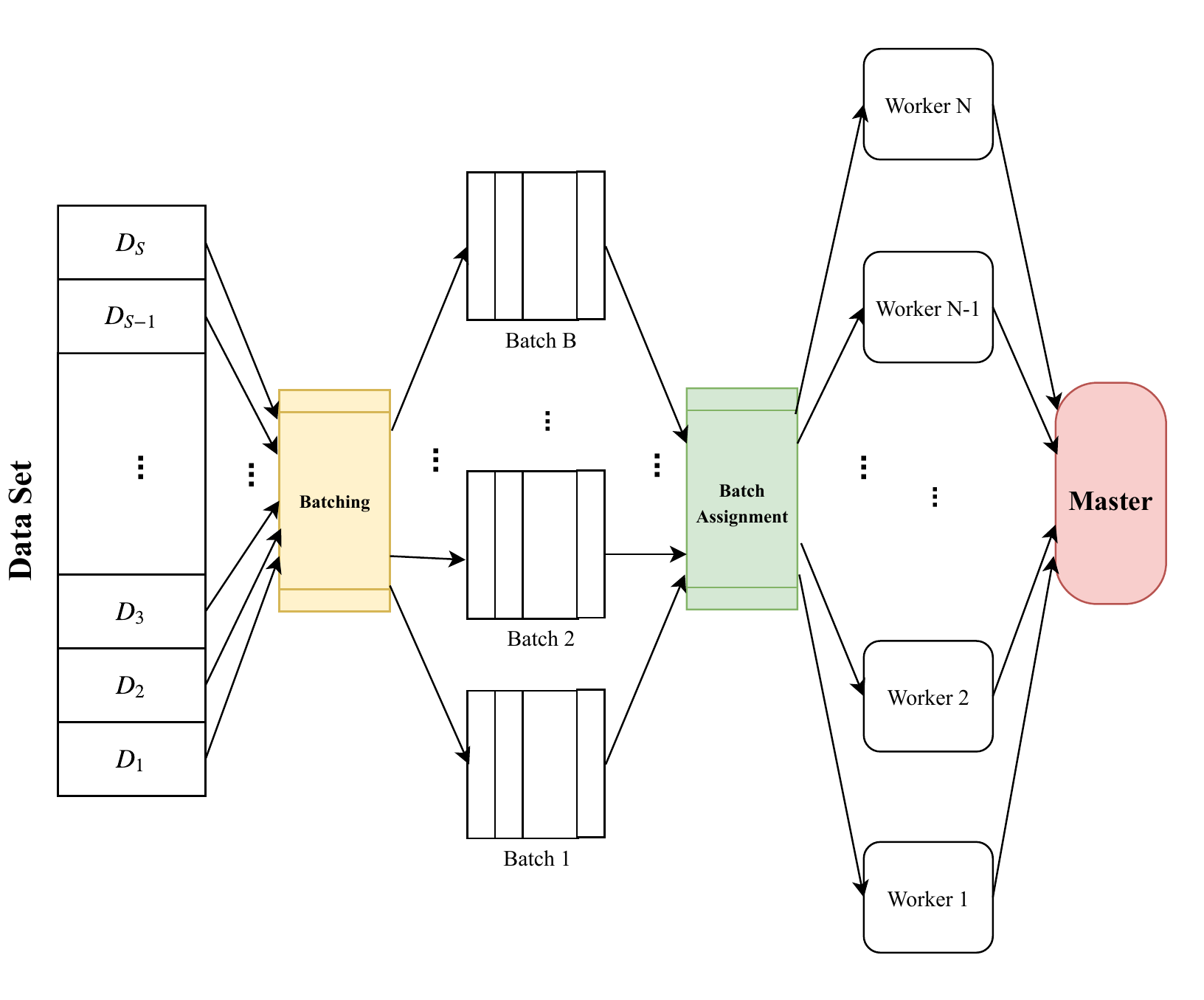}
   \caption{System model.}
   \label{fig:arc}
\end{figure}
\subsection{Computing Task Model}
In this work, we consider data-intensive jobs that could be divided into smaller tasks to be executed in parallel. Each task is assigned to a worker node which performs computations over a subset of the entire data and sends its local result to the master node. After receiving the local results from a large enough number of workers, the master node generates the overall result. This is a well applicable and widely used model for different algorithms, e.g. matrix multiplication \cite{choi1998new} and gradient based optimizers \cite{tandon2016gradient,boyd2011distributed}. As a simple example of this computing model, consider a set of natural numbers $D=\{1,2,3,\dots,z\}$. Suppose we are interested in the sum of $\ell$-th power of the elements of this set, i.e.,
\begin{equation*}
S_z^{(\ell)}=\sum_{k=1}^{z}k^\ell.  
\end{equation*}
This summation could be decomposed into $N$ sums (one for each worker):
\begin{equation}
S_z^{(\ell)}=\sum_{k=1}^{z_1}k^\ell+\sum_{k=z_1+1}^{z_2}k^\ell+
\sum_{k=z_2+1}^{z_3}k^\ell
+\dots+\sum_{k=z_{N-1}+1}^{z_N}k^\ell,
\label{computing model}
\end{equation}
where $z_1<z_2<z_3<\dots<z_{N-1}<z_N=z$. Now, if we split the set $D$ into subsets of size $z_1,z_2-z_1,z_3-z_2,\dots,z_{N}-z_{N-1}$ and assign each subset to one worker node, then a worker would have to compute the $\ell$-th power for each number in the data subset (batch) assigned to it and add them up. Then the master node would be able to generate $S_z^{(\ell)}$ by summing the local computation results.

\subsection{Data Distribution}
To ensure reliable computations, each block of the data set is stored redundantly among workers. The redundant distribution of data among the worker nodes is abstracted into a two-stage process. In the first stage, the data set is stored in equal-sized batches and in the second stage batches get assigned to the workers. Note that, the batches could be non-overlapping, when the data set is simply chopped into smaller parts, or they could overlap, if each data block is available in more than one batch. Nevertheless, the batch sizes will be the same in both cases. Therefore, there would be more batches if they overlap. Assuming the batch size is $S/B$, where $B$ divides $S$, the number of batches is an integer in the range of $[B,N]$.
\subsection{Service Time Model}
We define $T_{i,j}$ as the time that worker $j$ takes to perform computing over data batch $i$ and communicate the result to the master node. We assume that $T_{i,j}$ are all independent and exponentially, identically distributed:
\begin{equation}
    T_{i,j} \sim \exp(\lambda),
    \label{exp}
\end{equation}
where $\lambda$ is the service rate of the worker nodes and is identical for all of the workers.

\section{Data Distribution}\label{sectionIII}
The data distribution policies could be categorized as: $1)$ non-overlapping batches, or $2)$ overlapping batches. We will discuss each policy in detail in the following.
\subsection{Non-overlapping Batches}
Under this batching policy, the data set of $S$ blocks is split into $B$ equal-size batches, giving batch size $S/B$. When $B<N$, batches can be assigned redundantly to the $N$ worker nodes, in order to enable failure and/or straggler tolerance in the system. Since the intersection of the batches are empty, the data at each worker either completely overlap or do not overlap at all with the data at any other worker. Therefore, for acquiring the computations over a specific batch, the master node should receive the local results from at least one of the workers hosting that batch. 

The random batch assignment in which each worker draws a batch, uniformly at random, with replacement from the pool of batches was studied in \cite{li2017near}, and can be naturally modeled as the
 Coupon Collection (CC) problem. It was shown in \cite{li2017near} that the random assignment reduces the computation time per iteration compared to deterministic assignment in \cite{tandon2016gradient}. We will show in the following section that the imbalance data distribution among workers, which results from the random assignment, adversely affects the computation time.  On the  other hand, since some of the batches will be drawn only once, there will be no failure tolerance for a worker if its data is not replicated at any other worker. Furthermore, by random assignment, there is always a non-zero probability that some batches not get selected, leading to an inaccurate computation result. 
 
Let $n$ be the number of workers for covering all the batches in random assignment policy. The following proposition provides the data coverage probability with the random batch-to-worker assignment.

\begin{proposition}
The probability of covering $B$ batches with $N$ workers with random batch-to-worker assignment is given by,
\begin{equation}
P(n\leq N) = \frac{B!}{B^N}\stirlingii{N}{B},
\label{covering}
\end{equation}
where $\stirlingii{n}{B}$ is the Stirling number of second kind \cite{weisstein2002stirling}, given by,
\begin{equation*}
\stirlingii{n}{B}=\frac{1}{B!}\sum_{i=0}^{B}(-1)^{B-i}\binom{B}{i}i^n.
\end{equation*}
\end{proposition}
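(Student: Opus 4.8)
The plan is to recognize the random assignment as a uniform choice of a function from the set of $N$ workers to the set of $B$ batches, and then to count the outcomes in which every batch is hit at least once. First I would fix the sample space: since each worker independently draws one batch uniformly at random from the $B$ batches, the assignment is described by an ordered tuple $(c_1,\dots,c_N)\in\{1,\dots,B\}^N$, where $c_j$ is the batch drawn by worker $j$. All $B^N$ such tuples are equally likely, so $P(n\leq N)$, the probability that $N$ workers suffice to cover all batches, equals the fraction of tuples in which each of the $B$ batches appears at least once. Equivalently, $P(n\leq N)$ is the probability that the induced map from workers to batches is surjective.

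Next I would compute the number of favorable outcomes by inclusion--exclusion, so as to land directly on the expression given for the Stirling number. For $i\in\{1,\dots,B\}$ let $A_i$ be the event that batch $i$ is drawn by no worker; then $P(A_i)=((B-1)/B)^N$, and more generally the probability that a fixed set of $k$ batches are all missed is $((B-k)/B)^N$, since each worker must independently avoid those $k$ batches. Applying inclusion--exclusion to $\bigcup_i A_i$ and complementing gives
\[
P(n\leq N) = \sum_{k=0}^{B}(-1)^k\binom{B}{k}\left(\frac{B-k}{B}\right)^N = \frac{1}{B^N}\sum_{k=0}^{B}(-1)^k\binom{B}{k}(B-k)^N.
\]

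Finally I would reindex and match the target formula. Substituting $i=B-k$ turns the sum into $\frac{1}{B^N}\sum_{i=0}^{B}(-1)^{B-i}\binom{B}{i}i^N$, and comparing with the stated closed form for $\stirlingii{N}{B}$ shows that this equals $B!\,\stirlingii{N}{B}/B^N$, which is exactly the claimed expression. An equivalent and slightly slicker route would skip inclusion--exclusion: the number of surjections from an $N$-set onto a $B$-set is $B!\,\stirlingii{N}{B}$ (partition the workers into $B$ nonempty unlabeled blocks, then label the blocks with batches in $B!$ ways), so dividing this surjection count by the $B^N$ total outcomes yields the result directly.

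I expect the only real subtlety to be the correctness of the counting step rather than any analytic difficulty: I must ensure the assignment is modeled as ordered draws with replacement, so the denominator is $B^N$ and not a binomial coefficient, and that ``covering all batches'' is precisely surjectivity of the worker-to-batch map. Once the surjection count $B!\,\stirlingii{N}{B}$ is established, the identity is immediate, so the main obstacle is simply justifying that count, which the inclusion--exclusion computation above supplies.
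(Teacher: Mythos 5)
Your proof is correct, but it takes a genuinely different route from the paper's. You argue directly: the assignment is a uniformly random function from the $N$ workers to the $B$ batches, covering all batches means this function is surjective, and the surjection count $B!\stirlingii{N}{B}$ (obtained by inclusion--exclusion, or by the partition-then-label argument) divided by $B^N$ gives the result in one step. The paper instead starts from the distribution of the \emph{exact} number of workers needed to complete the coupon collection, $P(n=N)=\frac{B!}{B^N}\stirlingii{N-1}{B-1}$ (cited from an external reference), sums this over $n$ from $B$ to $N$, and collapses the resulting sum to $\frac{B!}{B^N}\stirlingii{N}{B}$ by invoking a Stirling-number identity from another reference. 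Your approach is more self-contained and elementary --- it needs nothing beyond inclusion--exclusion and the closed form for $\stirlingii{N}{B}$ already quoted in the statement --- whereas the paper's approach leans on two cited results but exposes the full stopping-time distribution $P(n=N)$ along the way, which is of independent interest in the coupon-collector framing. Your caveats about the sample space (ordered draws with replacement, denominator $B^N$) are exactly the right points to pin down, and you handle them correctly.
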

\begin{proof}
The probability of covering $B$ batches with exactly $N$ workers is given in \cite{myers2006some}, as
\begin{equation}
P(n=N)=\frac{B!}{B^N}\stirlingii{N-1}{B-1}.
\end{equation}
Therefore,
\begin{equation*}
\begin{split}
P(n\leq N)&=\sum_{n=B}^{N}\frac{B!}{B^n}\stirlingii{n-1}{B-1}\\
          &=B!\sum_{n=B-1}^{N-1}\frac{1}{B^{n+1}}\stirlingii{n}{B-1}\\
          &=\frac{B!}{B^N}\sum_{n=B-1}^{N-1}B^{N-n-1}\stirlingii{n}{B-1}\\
          &=\frac{B!}{B^N}\stirlingii{N}{B},
\end{split}
\end{equation*}
where the last equation holds according to \cite{comtet2012advanced}.
\end{proof}

\begin{figure}[htbp]
   \centering
   \includegraphics[width=\columnwidth, keepaspectratio]{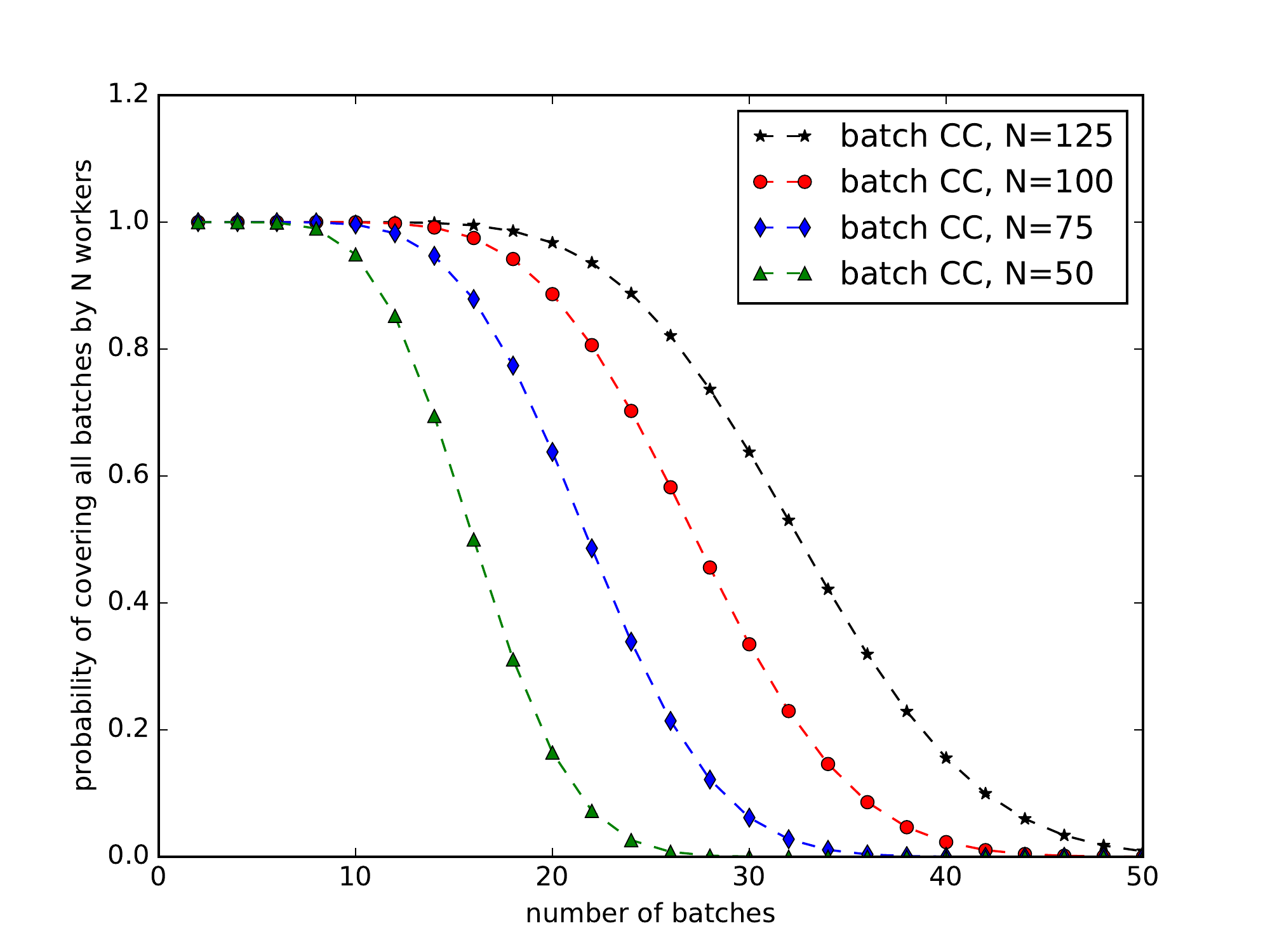}
   \caption{Coverage probability of batch CC with $N$ workers.}
   \label{fig:stirling}
\end{figure}

The probability of covering all batches (\ref{covering}) vs.\ the number of batches is plotted for four different values of $N$ in Fig.~\ref{fig:stirling}. We see that in order to cover all the $B$ batches with $N$ workers, the number of workers should be much larger than the number of batches. 
On the other hand, as the number of batches increases, the probability of covering them with a given number of workers decreases. Note that, the number of batches is of great interest to us, since file systems, \cite{ghemawat2003google} and \cite{shvachko2010hadoop}, usually store data in batches of fixed size. Therefore, the larger the data the larger the number of data batches. For example, the batch size in Hadoop distributed file system (HDFS) is 64MB, \cite{shvachko2010hadoop}.

\subsection{Overlapping Batches}
Under this batching policy, the data set is stored in $N$ overlapping batches, each assigned to a worker node. To be able to compare different data distributions' computing time, we keep the same batch sizes in both overlapping and non-overlapping policies. Thus, the batch size with overlapping batches is $S/B$, where $B$ is the number of batches in non-overlapping policy. Note that with overlapping batches, the number of batches is equal to the number of workers and, therefore, we need not decide about the number of workers assigned to each batch, which was the problem with non-overlapping batches. Besides, the challenge here is that which batches and how much should they overlap for faster computing. This question is in general very hard to answer (due to the huge size of the problem) and we will study it under the following assumption. We assume that the entire set of batches could be divided into non-overlapping \textit{group} of batches, such that each block of data set appears one and only one time in each group. In other words, each group hosts the entire data set, divided into batches of size $S/B$. Now the question is how we should arrange the elements of the data set in each group to have faster computations, which will be answered in section \ref{batching policy}.



\section{Computing Time Analysis}\label{sectionIV}
In this section we analyse the computing time of System~\ref{fig:arc}, under the two data distribution policies described in section \ref{sectionIII}.
\subsection{Computing Time with Non-overlapping Batches}
Let $N_i$ be the number of workers that are assigned with the data batch $i$, and define
$\EuScript{\Bar{N}}=(N_1,N_2,\dots,N_B)$ as the assignment vector. Recall that the service times of the worker nodes (the computation time and the time it takes to communicate the results) are assumed to be exponentially distributed with rate $\lambda$. 

\begin{figure}[htbp]
   \centering
   \includegraphics[width=\columnwidth, keepaspectratio]{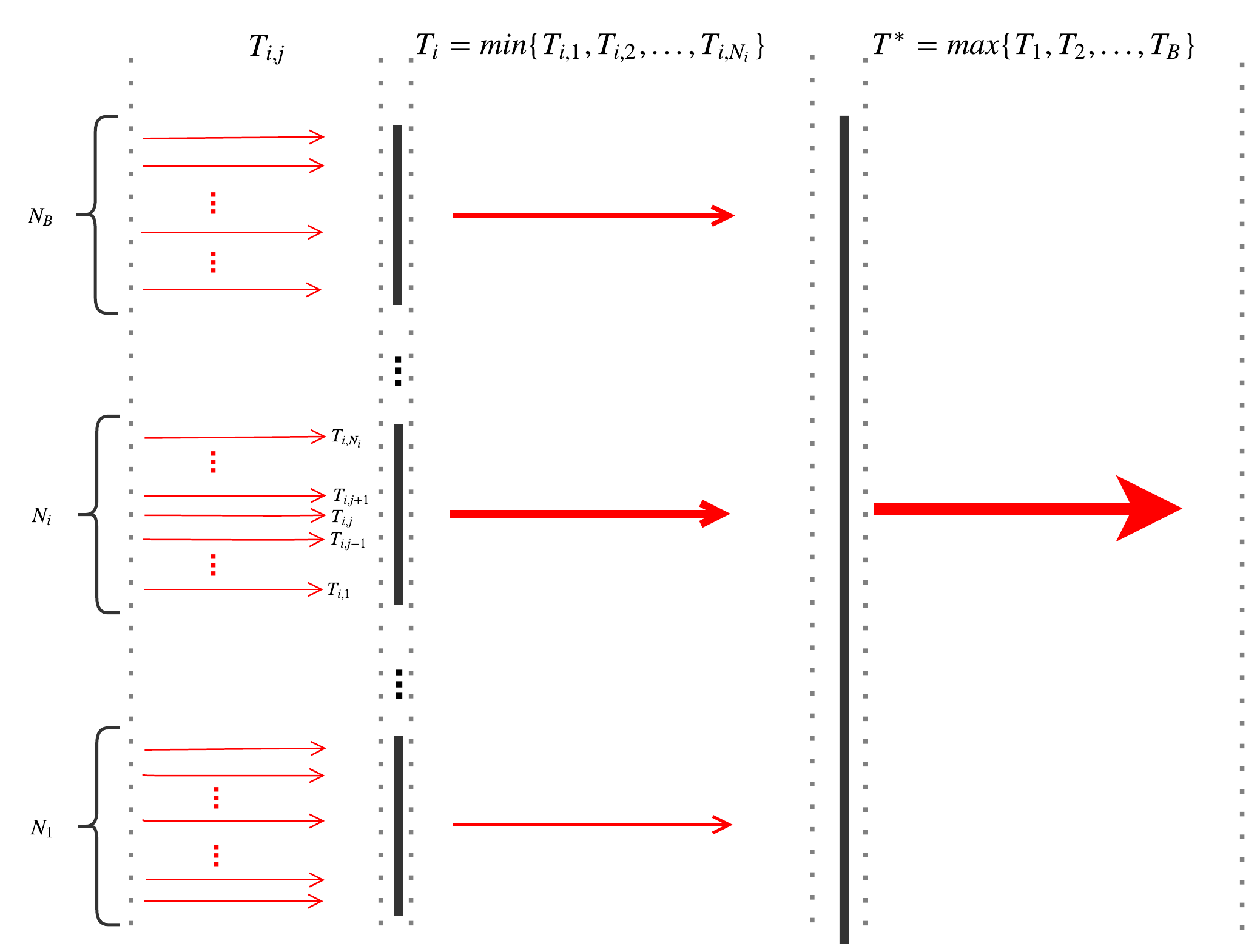}
   \caption{Time diagram of the system in Fig. \ref{fig:arc}.}
   \label{fig:fork}
\end{figure}

As illustrated in Fig.~\ref{fig:fork},  $N_i$ workers start computations over batch $i$ simultaneously. Thus, the result from the fastest worker, out of $N_i$, is sufficient for the master to recover the computation over batch $i$. Therefore, the $i$'th batch recovery time is the first order statistics of $N_i$ i.i.d exponential random variables, given by
\begin{equation}
    T_i=\min\left(T_{i,1},T_{i,2},\dots,T_{i,N_i}\right), \hspace{0.2cm} \forall i\in \{1,2,\dots,B\},
\label{min}
\end{equation}
where $T_{i,j}$s are i.i.d exponential random variables with rate $\lambda$, i.e.
$T_{i,j}\sim \exp(\lambda)$.
According to the exponential distribution properties \cite{desu1971characterization}, the recovery time of batch $i$ is also exponentially distributed with rate $N_i\lambda$,
\begin{equation}
    T_i\sim \exp\left(N_i\lambda\right),\quad \forall i\in \{1,2,\dots,B\}.
\end{equation}

For generating the overall result, the master node has to wait for the results of  computing over all batches. In other words, the overall result can be generated only after the slowest group of workers hosting the same batch deliver the local result. Hence, the overall result generating time, $T$, could be written as,
\begin{equation}
    T=\max\left(T_1,T_2,\dots,T_B\right).
\label{max}
\end{equation}
Here $T$ is the maximum order statistics of $B$ exponential random variables.

The question we address next is how should one redundantly assign $B$ non-overlapping batches of data among $N>B$ workers to achieve the shortest expected service time in System~\ref{fig:arc}. To this end, we next introduce several concepts.

\begin{definition}
The real valued random variable $X$ is greater than or equal to the real valued random variable $Y$ in the sense of \textbf{usual stochastic ordering}, shown by $X\underset{st}{\geq} Y$, if their tail distributions satisfy
\label{def1}
\begin{equation}
    P\{X>\beta\} \geq P\{Y>\beta\}, \quad \forall\beta\in \mathbb{R},
\end{equation}
or equivalently,
\begin{equation*}
   E[\phi(X)] \geq E[\phi(Y)],
\end{equation*}
for any non-decreasing function $\phi$.
\end{definition}

\begin{definition}
The random variable $X(\theta)$ is \textbf{stochastically decreasing and convex} if its tail distribution, $\Bar{F}_X(x) = P\{X>x\}$, is a pointwise decreasing and convex function of $\theta$.
\label{decreasing}
\end{definition}

\begin{definition}
For any $V_p=(v_{p1},v_{p2},\dots,v_{pM})$ in $\mathbb{R}^M$, the \textbf{rearranged coordinate vector} $V_{[p]}$ is defined as $V_{[p]}=(v_{[p1]},v_{[p2]},\dots,v_{[pM]})$, the elements of which are the elements of $V$ rearranged in decreasing order, i.e, $v_{[p1]}>v_{[p2]}>\dots>v_{[pM]}$.
\label{rearranged}
\end{definition}

\begin{definition}
Let $V_{[p]}=\left(v_{[p1]},v_{[p2]},\dots,v_{[pM]}\right)$ and $V_{[q]}=\left(v_{[q1]},v_{[q2]},\dots,v_{[qM]}\right)$  be two rearranged coordinate vectors in $\mathbb{R}^M$. Then $V_p$ \textbf{majorizes} $V_q$, denoted by $V_p\succeq V_q$, if
\begin{equation*}
\begin{split}
    &\sum_{i=1}^{m}v_{[pi]}\geq \sum_{i=1}^{m}v_{[qi]}, \enskip \forall m\in \{1,2,\dots,M\}, \enskip and\\
    &\sum_{i=1}^{M}v_{[pi]}= \sum_{i=1}^{M}v_{[qi]}.
\end{split}
\end{equation*}
\end{definition}

\begin{definition}
A real valued function $\phi:\mathbb{R}^M\rightarrow \mathbb{R}$ is \textbf{Schur convex} if for every $V$ and $W$ in $\mathbb{R}^M$, $V\succeq W$ implies $\phi(V)\geq \phi(W)$.
\end{definition}

\begin{definition}
A real valued random variable $Z(\Bar{x})$, $\Bar{x}\in \mathbb{R}^M$, is \textbf{stochastically schur convex} , in the sense of usual stochastic ordering, if for any $\Bar{x}$ and $\Bar{y}$ in $\mathbb{R}^M$, $\Bar{x}\succeq\Bar{y}$ implies $Z(\Bar{x})\geq Z(\Bar{y})$.
\end{definition}

\begin{lemma}
If the batch assignment $\EuScript{\Bar{N}}_{1}=(N_{11},N_{12},\dots,N_{1B})$
majorizes the batch assignment $\EuScript{\Bar{N}}_{2}=(N_{21},N_{22},\dots,N_{2B})$, that is, $\EuScript{\Bar{N}}_{1}\succeq\EuScript{\Bar{N}}_{2}$, then the corresponding service times $T(\EuScript{\Bar{N}}_{1})$ and $T(\EuScript{\Bar{N}}_{2})$ satisfy
\begin{equation*}
   E[ T(\EuScript{\Bar{N}}_{1})] \geq  E[ T(\EuScript{\Bar{N}}_{2})],
\end{equation*}
when the service time of the workers are independent and exponentially, identically distributed.
\label{lem}
\end{lemma}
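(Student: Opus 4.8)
The plan is to prove the stronger statement that the overall service time $T$ is \emph{stochastically Schur convex} in the assignment vector, in the sense of the usual stochastic order defined above; since the identity is a non-decreasing function, this immediately yields the stated inequality of expectations. Because the $T_i$ are independent with $T_i\sim\exp(N_i\lambda)$, the tail distribution of $T=\max_i T_i$ has the closed form
\begin{equation*}
\bar{F}(\beta;\EuScript{\Bar{N}})=P\{T>\beta\}=1-\prod_{i=1}^{B}\left(1-e^{-N_i\lambda\beta}\right),\qquad \beta\geq 0.
\end{equation*}
It therefore suffices to show that, for each fixed $\beta\geq 0$, the map $\EuScript{\Bar{N}}\mapsto \bar{F}(\beta;\EuScript{\Bar{N}})$ is Schur convex on $\mathbb{R}^{B}_{>0}$, since this says precisely that $\EuScript{\Bar{N}}_1\succeq\EuScript{\Bar{N}}_2$ forces $P\{T(\EuScript{\Bar{N}}_1)>\beta\}\geq P\{T(\EuScript{\Bar{N}}_2)>\beta\}$ for all $\beta$, i.e.\ that $T$ is stochastically Schur convex.

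First I would observe that $\bar{F}(\beta;\cdot)$ is symmetric (permuting batches leaves the maximum unchanged) and smooth, so I can invoke the Schur--Ostrowski criterion: a symmetric, differentiable function $\phi$ is Schur convex iff $(N_i-N_j)\big(\partial_{N_i}\phi-\partial_{N_j}\phi\big)\geq 0$ for all $i,j$. Differentiating gives $\partial_{N_i}\bar{F}=-\lambda\beta\, e^{-N_i\lambda\beta}\prod_{k\neq i}(1-e^{-N_k\lambda\beta})$. The central computation is to subtract the $j$-th derivative from the $i$-th: after factoring out the common product $\prod_{k\neq i,j}(1-e^{-N_k\lambda\beta})\geq 0$, the two cross terms cancel and the bracket collapses to $e^{-N_i\lambda\beta}-e^{-N_j\lambda\beta}$, so
\begin{equation*}
\partial_{N_i}\bar{F}-\partial_{N_j}\bar{F}=-\lambda\beta\Big(\textstyle\prod_{k\neq i,j}(1-e^{-N_k\lambda\beta})\Big)\big(e^{-N_i\lambda\beta}-e^{-N_j\lambda\beta}\big).
\end{equation*}

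The final step is a sign check. Since $x\mapsto e^{-x}$ is decreasing, the factors $(N_i-N_j)$ and $(e^{-N_i\lambda\beta}-e^{-N_j\lambda\beta})$ always carry opposite signs, so their product is $\leq 0$; multiplying by the nonnegative quantity $\lambda\beta\prod_{k\neq i,j}(1-e^{-N_k\lambda\beta})$ together with the leading minus sign gives $(N_i-N_j)\big(\partial_{N_i}\bar{F}-\partial_{N_j}\bar{F}\big)\geq 0$, which verifies Schur convexity of the tail for every $\beta$. Hence $\EuScript{\Bar{N}}_1\succeq\EuScript{\Bar{N}}_2$ implies $T(\EuScript{\Bar{N}}_1)\underset{st}{\geq}T(\EuScript{\Bar{N}}_2)$, and in particular $E[T(\EuScript{\Bar{N}}_1)]\geq E[T(\EuScript{\Bar{N}}_2)]$.

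I expect the main obstacle to be bookkeeping rather than conceptual: keeping the product factors and the minus signs aligned so that the clean cancellation to $e^{-N_i\lambda\beta}-e^{-N_j\lambda\beta}$ is transparent, and noting that although the $N_i$ are integers in the application, the Schur--Ostrowski argument is carried out on the smooth extension over $\mathbb{R}^{B}_{>0}$, which is legitimate because majorization is a statement about real vectors. An alternative route would integrate $\bar{F}$ in $\beta$ to work directly with $E[T]$ (or use the inclusion--exclusion formula for the expected maximum of independent exponentials), but establishing Schur convexity of the tail pointwise is cleaner and delivers the stronger stochastic-order conclusion for free.
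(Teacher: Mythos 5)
Your proof is correct, and it reaches the same intermediate claim as the paper --- that $T(\EuScript{\Bar{N}})=\max_i T_i$ is stochastically Schur convex in the assignment vector --- but by a genuinely different route. The paper gets there in one line by observing that each $T_i$ is stochastically decreasing and convex in $N_i$ and then citing a result of Liyanage and Shanthikumar that the maximum of such variables is stochastically decreasing and Schur convex; you instead write down the explicit tail $1-\prod_i\bigl(1-e^{-N_i\lambda\beta}\bigr)$ and verify the Schur--Ostrowski criterion by hand, with the cross terms cancelling to $e^{-N_i\lambda\beta}-e^{-N_j\lambda\beta}$ exactly as you describe. Your version is self-contained and makes transparent where independence and the exponential form are actually used, at the cost of a derivative computation; the paper's version is shorter but leans on a black-box citation (and its closing line, ``substituting $\phi$ by constant function,'' is a slip --- a constant $\phi$ gives a trivial equality, and it is the identity function that is needed, which you use correctly). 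Your remark that the Schur--Ostrowski argument lives on the smooth extension to $\mathbb{R}^{B}_{>0}$ while the application only needs integer vectors is the right thing to say and closes the only gap a referee might raise.
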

\begin{proof}
The service time for batch assignment policy $\EuScript{\Bar{N}}_{k}$, $\forall k\in\{1,2\}$, is given by
\begin{equation*}
     T(\EuScript{\Bar{N}}_{k})=\max\left(T_{k1},T_{k2},\dots,T_{kB}\right),
\end{equation*}
where $T_{ki}$s are exponentially distributed with rate $N_{ki}\lambda$. Form Definition \ref{decreasing}, $T_{ki}$ $\forall i\in\{1,2,\dots,B\}$, is stochastically decreasing and convex function of $N_{ki}$. Therefore, $ T(\EuScript{\Bar{N}}_{i})$, which is a maximum of stochastically decreasing and convex functions, is stochastically decreasing and schur convex function of $\EuScript{\Bar{N}}_{i}$, \cite{liyanage1992allocation}. Hence, by definition,  $\EuScript{\Bar{N}}_{1}\succeq\EuScript{\Bar{N}}_{2}$ implies $T(\EuScript{\Bar{N}}_{1}) \geq  T(\EuScript{\Bar{N}}_{2})$ in the sense of usual stochastic ordering. Hence, for any non-decreasing function $\phi$,
\begin{equation*}
    E[\phi( T(\EuScript{\Bar{N}}_{1}))] \geq E[\phi( T(\EuScript{\Bar{N}}_{2}))].
\end{equation*}
Substituting $\phi$ by constant function completes the proof.
\end{proof}
\begin{proposition}
The balanced batch assignment, defined as $\EuScript{\Bar{N}}_{b}=(N/B,N/B,\dots,N/B)$ with $N$ and $B$ being the respective number of workers and batchs, is majorized by any other batch assignment policy.
\begin{proof}
See \cite{marshall1979inequalities}.
\end{proof}
\label{majorization}
\end{proposition}
Then Theorem (\ref{mainThm}) follows immediately from Lemma~\ref{lem}.
\begin{theorem}
With exponentially distributed service time of workers, among all (non-overlapping) batch assignment policies, the balanced assignment achieves the minimum expected service time for the overall result generation.
\label{mainThm}
\end{theorem}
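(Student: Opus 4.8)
The plan is to obtain the theorem as a direct corollary of Proposition~\ref{majorization} and Lemma~\ref{lem}, exactly as the sentence preceding the statement anticipates. First I would fix an arbitrary admissible non-overlapping batch assignment $\EuScript{\Bar{N}}_{a}=(N_{a1},\dots,N_{aB})$, where admissibility means that the entries are positive integers summing to the total number of workers $N$, so that every batch is hosted by at least one worker. The balanced assignment $\EuScript{\Bar{N}}_{b}=(N/B,\dots,N/B)$ is itself admissible under the standing assumption that $B$ divides $N$, and it has the same coordinate sum $N$ as $\EuScript{\Bar{N}}_{a}$.

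Next I would invoke Proposition~\ref{majorization}, which asserts that the balanced assignment is majorized by every other assignment having the same total; in particular $\EuScript{\Bar{N}}_{a}\succeq\EuScript{\Bar{N}}_{b}$. Feeding this majorization relation into Lemma~\ref{lem}, whose hypotheses (independent, identically distributed exponential service times) are precisely those of the theorem, yields
\begin{equation*}
E[T(\EuScript{\Bar{N}}_{a})]\geq E[T(\EuScript{\Bar{N}}_{b})].
\end{equation*}
Since $\EuScript{\Bar{N}}_{a}$ was arbitrary, the balanced assignment simultaneously lower-bounds the expected overall service time over the whole family of admissible assignments, which is exactly the minimality claim.

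There is essentially no deep obstacle once the two preceding results are in hand, so the work is in checking that their hypotheses genuinely apply to every competitor. The one point that merits a sentence of care is the integrality and equal-sum condition: majorization requires the two assignment vectors to share the same total, which is automatic here because every admissible policy distributes exactly $N$ workers, and the balanced vector is a legitimate integer assignment only when $B\mid N$. I would state this divisibility assumption explicitly (it is already implicit in the definition of $\EuScript{\Bar{N}}_{b}$) so that $\EuScript{\Bar{N}}_{b}$ is a feasible element of the comparison class rather than merely a formal lower bound. With that caveat recorded, the chain ``arbitrary policy $\Rightarrow$ majorizes balanced $\Rightarrow$ larger expected maximum'' closes the argument.
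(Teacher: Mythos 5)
Your proposal is correct and follows essentially the same route as the paper: cite Proposition~\ref{majorization} to get that any competitor majorizes the balanced vector, then apply Lemma~\ref{lem} to convert the majorization into the expected-time inequality. Your added remark about the divisibility condition $B\mid N$ and the equal-sum requirement for majorization is a reasonable clarification but does not change the argument.
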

\begin{proof}
From Proposition \ref{majorization}, $\EuScript{\Bar{N}}_a \succeq \EuScript{\Bar{N}}_b$ for any arbitrary batch assignment strategy $\EuScript{\Bar{N}}_a$ and from Lemma~\ref{lem}, and by choosing $\phi$ to be the constant function, the following inequality holds:
\begin{equation*}
E[ T(\EuScript{\Bar{N}}_{a})] \geq  E[ T(\EuScript{\Bar{N}}_{b})],
\end{equation*}
which means that among all non-overlapping batch assignment policies, the balanced assignment achieves the minimum expected computing time.
\end{proof}
Now, in the following proposition we precisely quantify the minimum expected computing time, which could be achieved by the balanced assignment.
\begin{proposition}
The lower bound of the expected time for overall result generation in System \ref{fig:arc}, with non-overlapping batches, when the service time of the workers are exponentially distributed with rate $\lambda$, is $\frac{B}{N\lambda}H_B$.
\end{proposition}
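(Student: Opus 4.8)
The plan is to combine Theorem~\ref{mainThm} with a direct evaluation of the expected maximum of independent, identically distributed exponential random variables. By Theorem~\ref{mainThm}, the minimum expected completion time over all non-overlapping batch assignments is achieved by the balanced assignment $\EuScript{\Bar{N}}_b=(N/B,\dots,N/B)$, so it suffices to compute $E[T(\EuScript{\Bar{N}}_b)]$ explicitly. Under $\EuScript{\Bar{N}}_b$ each batch is hosted by exactly $N/B$ workers, hence each batch-recovery time $T_i$ is exponential with rate $\frac{N}{B}\lambda$, and the $T_i$ are independent. Writing $\mu:=\frac{N}{B}\lambda$, the quantity $T(\EuScript{\Bar{N}}_b)=\max(T_1,\dots,T_B)$ is therefore the maximum order statistic of $B$ i.i.d.\ $\exp(\mu)$ variables.

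The core step is to show that this maximum has expectation $\frac{1}{\mu}H_B$, where $H_B=\sum_{k=1}^{B}\frac{1}{k}$ is the $B$-th harmonic number. I would use the order-statistics (spacings) representation for exponentials: by the memoryless property the gaps between consecutive order statistics are independent, and the gap during which exactly $k$ batch times remain uncompleted is exponential with rate $k\mu$. Consequently the maximum is distributed as a sum of independent exponentials with rates $B\mu,(B-1)\mu,\dots,\mu$, giving
\[
E[\max(T_1,\dots,T_B)]=\sum_{k=1}^{B}\frac{1}{k\mu}=\frac{H_B}{\mu}.
\]
As an alternative route I could integrate the tail, $E[\max]=\int_0^\infty\bigl(1-(1-e^{-\mu t})^B\bigr)\,dt$, expand $(1-e^{-\mu t})^B$ by the binomial theorem, integrate term by term, and then invoke the classical identity $\sum_{k=1}^{B}\binom{B}{k}\frac{(-1)^{k+1}}{k}=H_B$.

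Finally I would substitute $\mu=\frac{N}{B}\lambda$ into $\frac{H_B}{\mu}$ to obtain $E[T(\EuScript{\Bar{N}}_b)]=\frac{B}{N\lambda}H_B$, which by Theorem~\ref{mainThm} is the lower bound over all non-overlapping assignments, as claimed. The only genuinely substantive point is the evaluation of the expected maximum; both routes above are standard, so I anticipate no real obstacle beyond carefully justifying the spacings representation (or, in the second route, the harmonic-number identity). One minor caveat worth flagging is that $N/B$ must be an integer for the balanced assignment to be literally realizable, so strictly speaking the formula is the lower bound attained when $B$ divides $N$.
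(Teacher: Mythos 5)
Your proposal is correct and follows essentially the same route as the paper: invoke Theorem~\ref{mainThm} to reduce to the balanced assignment, note that each $T_i$ is then exponential with rate $\frac{N\lambda}{B}$, and use the fact that the maximum of $B$ i.i.d.\ exponentials with rate $\mu$ has mean $\frac{1}{\mu}H_B$. The only difference is that the paper cites this last fact from the literature, whereas you derive it via the spacings representation, which is exactly the standard justification.
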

\begin{proof}
From (\ref{max}), the overall result generation time is the maximum order statistics of $B$ i.i.d random variables. With balanced assignment of non-overlapping batches and exponentially distributed service time of workers, $T_i$s are also exponential with rate $\frac{N\lambda}{B}$. On the other hand, the expected value of the maximum order statistics of $B$ i.i.d exponential random variables with rate $\mu$ is $\frac{1}{\mu}H_B$, \cite{crow2007two}. Substituting $\mu$ by $\frac{N\lambda}{B}$ completes the proof.
\end{proof}
\subsection{Computing Time with Overlapping Batches} \label{batching policy}
Recall the original problem of assigning a data set of size $S$ redundantly among $N$ workers. We assumed $S=N$. Each worker is assigned $N/B$  data blocks, for a given parameter $B$. Furthermore, the number of copies of each data block is identical and each worker hosts at most one copy of a block. Therefore, each data block will be available at exactly $N/B$ workers. 
\begin{figure*}%
    \centering
    \begin{subfigure}[Cyclic batching. Groups does not share same subset of the data set.]{%
        \label{6workers1}%
        \centering
        \includegraphics[width=5cm, trim={6cm 23cm 5cm 0},clip]{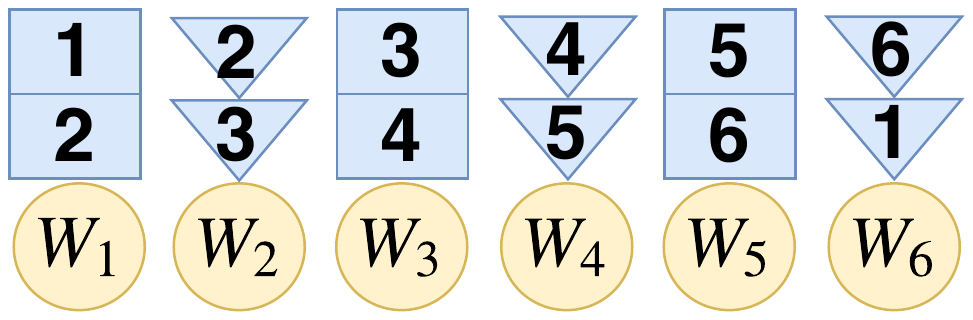}}%
    \end{subfigure}
    \quad
    \begin{subfigure}[Groups share a same subset of data set at $W_5$ and $W_6$.]{%
        \label{6workers2}%
        \centering
        \includegraphics[width=5cm, trim={6cm 23cm 5cm 0},clip]{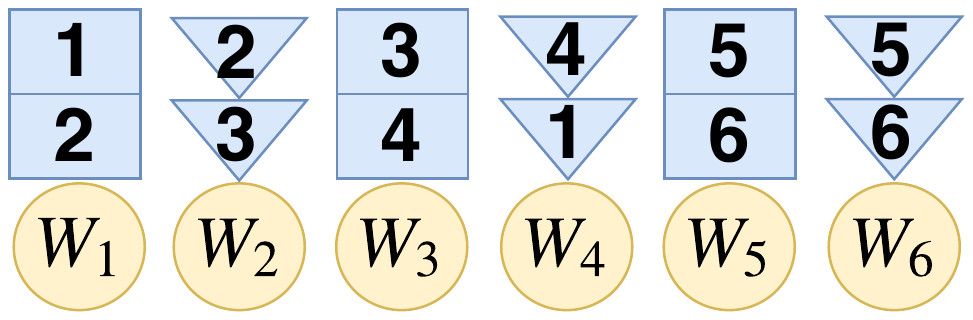}}%
    \end{subfigure}
    \quad
    \begin{subfigure}[Groups share same subsets of data set (non-overlapping batches).]{%
        \label{6workers3}%
        \centering
        \includegraphics[width=5cm, trim={6cm 23cm 5cm 0},clip]{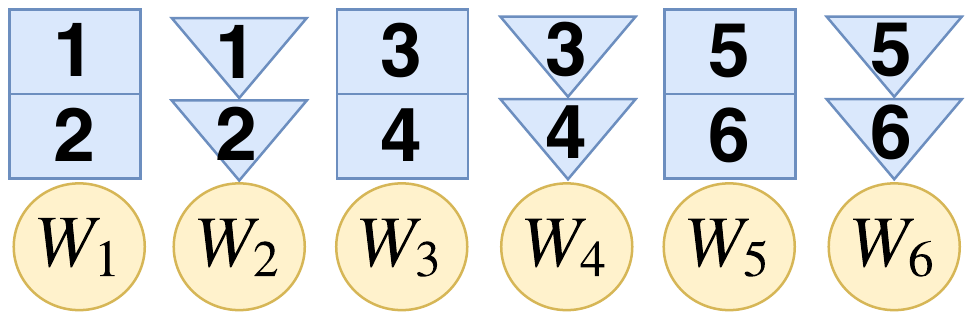}}%
    \end{subfigure}
    \caption{Overlapping batches with $N=S=6$ and $B=3$.}
    \label{6workers}
\end{figure*}

Suppose that the data set is divided into $N$ overlapping batches, each with size $N/B$, in a cyclic order, such that the first batch consists of blocks 1 through $N/B$, the second batch is comprised of blocks 2 through $N/B+1$, and so on. An example of this batching is given in Fig.~\ref{6workers1}. With this data batching policy, which we call it \textit{cyclic batching}, no two batches share the exact same data blocks but, on the other hand, the number of batches which share at least one data block with a specific batch is maximum. Specifically, with cyclic batching, each bach shares at least one common data block with $2\left(N/B-1\right)$ other batches. On the other hand, with non-overlapping batches, as it is shown in Fig.~\ref{6workers3}, each batch shares exactly $N/B$ data blocks with $N/B-1$ other batches. With any other batching policy each batch shares same data blocks with more than $N/B-1$ and less than $2\left(N/B-1\right)$ other batches, an example of which is given in Fig.~\ref{6workers2}. In what follows, we will provide analytic comparison of the computing time of System \ref{fig:arc}, with three different batching policies, provided in Fig.~\ref{6workers}.

Consider System~\ref{fig:arc}, with $N=S=6$, $B=3$, and three different batching policies, as shown in Fig.~\ref{6workers}. In each policy, there are two groups, shown by different shapes. Each group comprises the entire data set, in all three policies. However, the placement of blocks in groups is different across batching policies.  Let $X_i$ $\forall i\in\{1,2,\dots,6\}$ be the i.i.d random vector of workers service time. Let us assume, without loss of generality, that $W_1$ is the fastest worker, delivering its local results before the rest of the workers. Then the overall result generating time for policy \ref{6workers1} is
\begin{equation}
   T^*_{(a)}=\min\left(\max\left(X_3,X_5\right),\max\left(X_2,X_4,X_6\right)\right).
   \label{4a}
\end{equation}
For policy \ref{6workers2}, the overall result generating time could be written as,
\begin{align}
    T^*_{(b)}=\min( & \max(X_3,\min(X_5,X_6)),\\
       &
    \max\left(\max\left(X_2,X_4\right),\min\left(X_5,X_6\right)\right).
       \label{4b}
\end{align}
Comparing (\ref{4a}) and (\ref{4b}), it is easy to see that $E[T^*_{(b)}]<E[T^*_{(a)}]$, since,
\begin{equation*}
\begin{split}
    E[\max\left(X_3,\min\left(X_5,X_6\right)\right)]<E[\max\left(X_3,X_5\right)],
\end{split}
\end{equation*}
\begin{equation*}
\begin{split}
    &E[\max\left(\min\left(X_5,X_6\right),\max\left(X_2,X_4\right)\right)]<\\
    &\hspace{5cm}E[\max\left(X_2,X_4,X_6\right)].
\end{split}
\end{equation*}
On the other hand, for policy \ref{6workers3},
\begin{equation}
    T^*_{(c)}=\max\left(\min\left(X_3,X_4\right),\min\left(X_5,X_6\right)\right).
    \label{4c}
\end{equation}
In order to be able to compare the computing time of \ref{6workers3}, we rewrite (\ref{4b}) as follows:
\begin{equation}
    T^*_{(b)}=\max\left(\min\left(X_3,\max\left(X_2,X_4\right)\right),\min\left(X_5,X_6\right)\right).
    \label{4b2}
\end{equation}
The first argument of the outmost max functions in (\ref{4c}) and (\ref{4b2}) are compared as,
\begin{equation*}
    E[\min\left(X_3,X_4\right)]<E[\min\left(X_3,\max\left(X_2,X_4\right)\right)].
\end{equation*}
Therefore, $E[T^*_{(c)}]<E[T^*_{(b)}]$. Accordingly, the expected computing times of three batching policies in Fig. \ref{6workers} are compared as follows:
\begin{equation}
    E[T^*_{(c)}]<E[T^*_{(b)}]<E[T^*_{(a)}],
\label{comparison}
\end{equation}
which essentially means that, the balanced assignment of non-overlapping batches achieves the minimum expected computing time when compared to overlapping batch assignment.

\section{Numerical Results}\label{sectionV}
In this section we numerically evaluate the analytical results obtained in Sec.~\ref{sectionIV}. In particular, we compare the computing time of balanced non-overlapping-batch assignment with cyclic overlapping-batch assignment.

\begin{figure}[htb]
   \centering
   \includegraphics[width=\columnwidth, keepaspectratio]{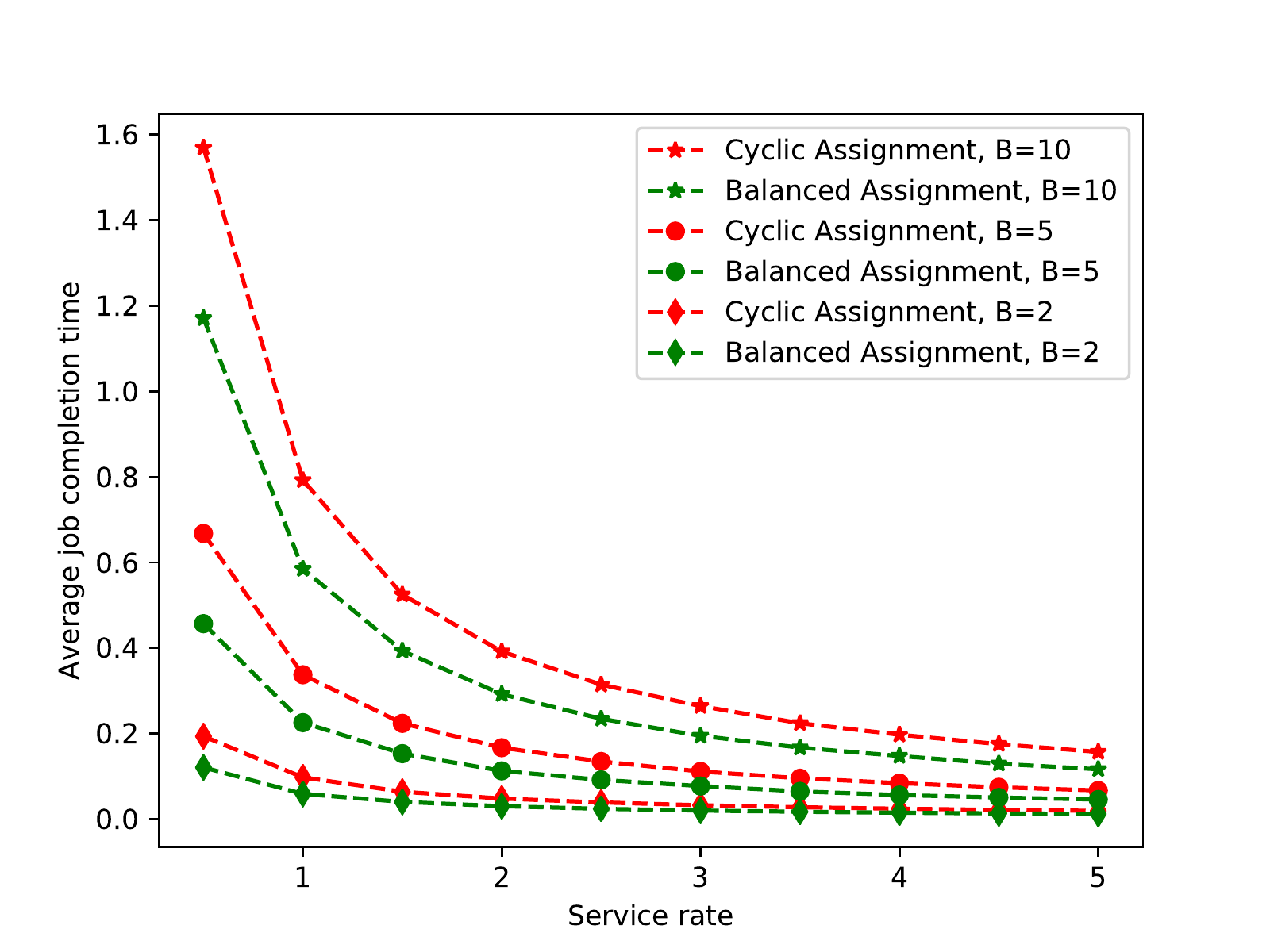}
   \caption{The comparison of the average job completion time for cyclic assignment \ref{6workers1} and balanced assignment \ref{6workers3}, with $N=50$ worker nodes and three different $B$ parameters.}
   \label{fig:sim1}
\end{figure}

In Fig.~\ref{fig:sim1}, the average computing time of balanced and cyclic assignments are plotted versus the service rate of the worker nodes for three different values of parameter $B$ and $N=S=50$. It can be seen that balanced assignment of the non-overlapping batches achieves lower computing time compared to cyclic assignment, for all three $B$s and service rates. This observation is in line with our earlier analytic result in (\ref{comparison}). Moreover, the gap between the two assignment polices gets larger in the low service rate region, which could be explained as follows. With exponentially distributed service times of workers, the time between each consequent local result deliveries to the master node is also exponentially distributed (due to the memoryless property of the exponential distribution), the average of which increases as the service rate of the workers decrease. On the other hand, from our analytic results we know that on average with the cyclic assignment, the master node would have to wait for more worker nodes to respond, compared to balanced assignment. Therefore, with lower service rate, the difference between computing times of the two assignment policies should increase. Besides, the performance gap decreases as the parameter $B$ decrease. In order to explain this behaviour, recall that the number of data blocks at each machine is $N/B$. Hence, for smaller $B$s, each worker does a larger portion of the computation and the master node should wait for a smaller number of worker nodes, which results in smaller performance gap between the two policies.

\section{Conclusion}\label{sectionVI}
The computing time of fault-tolerant distributed computing algorithms was analyzed. Specifically, the two recently proposed algorithms with focus on using error correcting codes and random task assignment were studied. It was analytically showed that, algorithms with the same amount of redundancy and failure tolerance would have different expected computing time. It was proved that, with a fixed number of computing tasks and exponentially distributed service time of the worker nodes, balanced assignment of non-overlapping data batches achieves minimum expected computing time, which was also validated by the simulation results.








\bibliographystyle{IEEEtran}
\bibliography{ref}

\end{document}